\documentclass[onecolumn, draftcls,12pt]{IEEEtran}

\IEEEoverridecommandlockouts                              	
\overrideIEEEmargins

\usepackage{amsmath,amsfonts,amssymb,amscd}
\usepackage{algorithm,algorithmic}
\usepackage{graphics} 
\usepackage{graphicx}
\usepackage{epsfig} 
\usepackage{float}
\usepackage{cite}
\usepackage{verbatim}
\usepackage{url,changebar,bm,xspace,dsfont}
\usepackage[usenames,dvipsnames]{color}

\newtheorem{remark}{\bfseries Remark}

\newtheorem{defn}{\bfseries Definition}
\newtheorem{prop}{\bfseries Proposition}

\let\mathbb=\mathds 
\def\diag{\mathop{\mathrm{diag}}}  


\floatstyle{ruled}
\newfloat{model}{H}{mod}
\floatname{model}{\footnotesize Model}
\newfloat{notatio}{H}{not}
\floatname{notatio}{\footnotesize Notation}

\hyphenation{op-tical net-works semi-conduc-tor}

\begin{document}

%
%
%
%
\title{\LARGE \bf Distributed Formation of Balanced and Bistochastic \\ Weighted Diagraphs in Multi-Agent Systems
}

\author{Themistoklis~Charalambous\thanks{Themistoklis~Charalambous is with the School of Electrical Engineering, Royal Institute of Technology (KTH), Stockholm, Sweden. E-mail: {\tt themisc@kth.se}.} and Christoforos~N.~Hadjicostis
        \thanks{Christoforos~N.~Hadjicostis is with the Department of Electrical and Computer Engineering at the University of Cyprus, Nicosia, Cyprus. E-mail:{\tt~ chadjic@ucy.ac.cy}.}
}
\maketitle
\thispagestyle{empty}
\pagestyle{empty}

%
%
%
%
\begin{abstract}
Consensus strategies find a variety of applications in distributed coordination and decision making in multi-agent systems. In particular, average consensus plays a key role in a number of applications and is closely associated with two classes of digraphs, weight-balanced (for continuous-time systems) and bistochastic (for discrete-time systems). A weighted digraph is called balanced if, for each node $v_j$, the sum of the weights of the edges outgoing from that node is equal to the sum of the weights of the edges incoming to that node. In addition, a weight-balanced digraph is bistochastic if all weights are nonnegative and, for each node $v_j$, the sum of weights of edges incoming to that node and the sum of the weights of edges out-going from that node is unity; this implies that the corresponding weight matrix is column and row stochastic (i.e., doubly stochastic). We propose two distributed algorithms: one solves the weight-balance problem and the other solves the bistochastic matrix formation problem for a distributed system whose components (nodes) can exchange information via interconnection links (edges) that form an arbitrary, possibly directed, strongly connected communication topology (digraph). Both distributed algorithms achieve their goals asymptotically and operate iteratively by having each node adapt the (nonnegative) weights on its outgoing edges based on the weights of its incoming links (i.e., based on purely local information). We also provide examples to illustrate the operation, performance, and potential advantages of the proposed algorithms.
\end{abstract}

%
%
%
%
\section{Introduction}\label{sec:introduction}

A distributed system or network consists of a set of subsystems (nodes) that can share information via connection links (edges), forming a directed communication topology (digraph). The successful operation of a distributed system depends on the structure of the digraph which typically proves to be of vital importance for our ability to apply distributed strategies and perform various tasks. Cooperative distributed control algorithms and protocols have received tremendous attention during the last decade by several diverse research communities (e.g., biology, physics, control, communication, and computer science), resulting in many recent advances in consensus-based approaches (see, for example, \cite{2003:jadbabaie_coordination, 2004:Murray, 2005:wei_ren_consensus,2005:Moreau,2006:angeli_stability,2006:XiaoWang,2007:olfati-saber_consensus, 2012:damiano} and references therein). 

In general, the objective of a consensus problem is to have the agents belonging to a group agree upon a certain ({\em a priori} unknown) quantity of interest. When the agents reach agreement, we say that the distributed system reaches consensus. Such tasks include network coordination problems involving self-organization, formation patterns, parallel processing and distributed optimization. One of the most well known consensus problems is the so-called average consensus problem in which agents aim to reach the average of their initial values. The initial value associated with each agent might be, for instance, a sensor measurement of some signal \cite{2004:Sensors}, Bayesian belief of a decision to be taken \cite{2009:Greene}, or the capacity of distributed energy resources for the provisioning of ancillary services \cite{2011:Christoforos-Themis}. Average consensus is closely related to two classes of graphs: weight-balanced (for continuous-time systems) and bistochastic graphs (for discrete-time systems). A weighted graph is balanced if for each node, the sum of the weights of the edges outgoing from that node is equal to the sum of weights of the edges incoming to that node. A bistochastic graph is a weight-balanced graph for which the weights are nonnegative and the sum of the weights for both outgoing and incoming edges (including self weights at each node) is equal to one. In both cases, all edge weights are typically required to be nonzero.

It is shown in \cite{2004:Murray} that average consensus is achieved if the information exchange topology is both strongly connected and balanced, while gossip algorithms \cite{2005:Boyd, 2009:Nedic} and convex optimization \cite{2009:Johansson} admit update matrices which need to be doubly stochastic. These methods have applicability to a variety of topics, such as multi-agent systems, cooperative control, and modeling the behaviour of various phenomena in biology and physics, such as flocking. Since their operation requires weight-balanced and bistochastic digraphs, it is important to be able to distributively transform a weighted digraph to a weight-balanced or bistochastic one, provided that each node is allowed to adjust the weight of its outgoing links accordingly.  

In this paper, we address the problem of designing discrete-time coordination algorithms that allow a networked system to distributively obtain a set of weights that make it weight-balanced or bistochastic. This task is relatively straightforward when the underlying communication topology forms an undirected graph (i.e., when communication links are bi-directional) but more challenging when dealing with a digraph (i.e., when some communication links might be uni-directional). The paper proposes two algorithms that can be used in distributed networks with directed interconnection topologies; the first algorithm leads to a weight-balanced digraph. Even though there exist some earlier approaches in the literature for weight balancing (e.g., \cite{2009:Cortes} presents a finite-time algorithm but does not characterize the number of steps required in the worst case, whereas \cite{2012:Rikos} presents an asymptotic algorithm whose rate is bounded explicitly based on the structure of the graph), this is the first time, to the best of our knowledge, that an asymptotic algorithm of this nature is shown to admit geometric convergence rate.  Work in the literature appears for bistochastic matrix formation as well (e.g., \cite{2011:Christoforos} proposes an asymptotic algorithm with an unspecified rate of convergence). Our second proposed algorithm, a modification of the weight-balancing algorithm, leads to a bistochastic digraph with asymptotic convergence. Under some minor assumptions, this second algorithm can also be shown to admit a geometric rate.

The remainder of the paper is organised as follows. In Section \ref{sec:preliminaries}, we provide necessary notation and background on graph properties. In Section \ref{sec:formulation}, the problem to be solved is formulated, and Sections \ref{sec:resultsA} and \ref{sec:resultsB} present our main results in which we propose two algorithms, one for weight-balancing and one for bistochastic matrix formation. Then, in Section~\ref{sec:examples}, the derived algorithms are demonstrated via illustrative examples and their performance is compared against existing approaches in the literature. Finally, Section \ref{sec:conclusions} presents concluding remarks and future directions. 

%
%
%
%
\section{Notation and Preliminaries}\label{sec:preliminaries}

The sets of real, integer and natural numbers are denoted by $\mathds{R}$, $\mathds{Z}$ and $\mathds{N}$, respectively; their positive orthant is denoted by the subscript $+$ (e.g. $\mathds{R}_{+}$). The symbol $\mathds{N}_0$ denotes the set of non-negative integers. Vectors are denoted by small letters whereas matrices are denoted by capital letters; $A^{-1}$ denotes the inverse of matrix $A$. By $I$ we denote the identity matrix (of appropriate dimensions), whereas by $\mathbb{1}$ we denote a column vector (of appropriate dimension) whose elements are all equal to one. A matrix whose elements are nonnegative, called nonnegative matrix, is denoted by $A \geq 0$ and a matrix whose elements are positive, called positive matrix, is denoted by $A>0$. $\sigma(A)$ denotes the spectrum of matrix $A$, $\lambda(A)$ denotes an element of the spectrum of matrix $A$, and $\rho(A)$ denotes its spectral radius. Notation $\diag(x_{i})$ is used to denote the matrix with elements in the finite set $\{ x_{1}, x_{2}, ..., x_{i}, ... \}$ on the leading diagonal and zeros elsewhere. 

Let the exchange of information between nodes be modeled by a weighted digraph (directed graph) $\mathcal{G}(\mathcal{V}, \mathcal{E},W)$ of order $n$ $(n \geq 2)$, where $\mathcal{V} = \{v_1,v_2,\ldots,v_n\}$ is the set of nodes, $\mathcal{E} \subseteq \mathcal{V} \times \mathcal{V}-\{(v_j, v_j) \; | \; v_j \in \mathcal{V} \}$ is the set of edges, and $W = [w_{ji} ] \in \mathbb{R}_{+}^{n\times n}$ is a weighted ${n\times n}$ adjacency matrix where $w_{ji}$ are nonnegative elements. A directed edge from node $v_i$ to node $v_j$ is denoted by $\varepsilon_{ji} = (v_j, v_i)\in \mathcal{E}$, which represents a directed information exchange link from node $v_i$ to node $v_j$, i.e., it denotes that node $v_j$ can receive information from node $v_i$. Note that the definition of $\mathcal{G}$ excludes self-edges (though self-weights are added when we consider bistochastic digraphs). A directed edge $\varepsilon_{ji} \in \mathcal{E}$ if and only if $w_{ji} > 0$. The graph is undirected if and only if $\varepsilon_{ji} \in \mathcal{E}$ implies $\varepsilon_{ij}  \in \mathcal{E}$. Note that a digraph $\mathcal{G}(\mathcal{V}, \mathcal{E})$ can be thought of as a weighted digraph $\mathcal{G}(\mathcal{V}, \mathcal{E}, W)$ by defining the weighted adjacency matrix $W$ with $w_{ji} =1$ if $\varepsilon_{ji} \in \mathcal{E}$, and $w_{ji} =0$ otherwise.

A digraph is called \emph{strongly} connected if for each pair of vertices $v_j, v_i \in \mathcal{V}$, $v_j \neq v_i$, there exists a directed \emph{path} from $v_j$ to $v_i$, i.e., we can find a sequence of vertices $v_j \equiv v_{l_0}$, $v_{l_1}$, ..., $v_{l_t} \equiv v_i$ such that $(v_{l_{\tau+1}}, v_{l_\tau}) \in \mathcal{E}$ for $\tau = 0, 1, ..., t-1$. All nodes that can transmit information to node $v_j$ directly are said to be in-neighbors of node $v_j$ and belong to the set $\mathcal{N}^{-}_j=\{ v_i\in \mathcal{V} \; | \; \varepsilon_{ji} \in \mathcal{E} \}$. The cardinality of $\mathcal{N}^{-}_j$, is called the \emph{in-degree} of $j$ and it is denoted by $\mathcal{D}^{-}_{j}$. The nodes that receive information from node $j$ comprise its out-neighbors and are denoted by $\mathcal{N}^{+}_j=\{ v_l \in \mathcal{V} \; | \; \varepsilon_{lj} \in \mathcal{E} \}$. The cardinality of $\mathcal{N}^{+}_j$, is called the \emph{out-degree} of $v_j$ and it is denoted by $\mathcal{D}^{+}_{j}$. Given a weighted digraph $\mathcal{G}(\mathcal{V}, \mathcal{E},W)$ of order $n$, the total \emph{in-weight} of node $v_j$ is denoted as $S_j^-$ and is defined by $S_j^- = \sum_{v_i \in \mathcal{N}_j^{-}} w_{ji}$, whereas the total \emph{out-weight} of node $v_j$ is denoted by $S_j^+$ and is defined as $S_j^+ = \sum_{v_l \in \mathcal{N}_j^{+}} w_{lj}$.

\begin{defn}
A weighted digraph $\mathcal{G}(\mathcal{V}, \mathcal{E},W)$ is called weight-balanced if the total in-weight equals the total out-weight for every node $v_j \in V$, i.e., $S_j^- = S_j^+$. A weight-balanced digraph is also called doubly stochastic (bistochastic) if each of its weighted adjacency matrix rows and columns sums to $1$.
\end{defn}

For the discrete-time setup we investigate, we conveniently define the time coordinate so that unity is the time between consecutive iterations. For example $S_j^+[k]$ will denote the value of the out-weight of node $v_j$ at time instant $k$, $k\in \mathds{N}_0$.


%
%
%
%
\section{Problem Formulation}\label{sec:formulation}

Given a digraph $\mathcal{G}(\mathcal{V}, \mathcal{E})$, we want distributed algorithms that allow the nodes to obtain a weight matrix $W = [w_{ji}$ such that the following are achieved.
\begin{enumerate}
\item[(i)] The weighted digraph becomes balanced in a distributed fashion; i.e., a weight matrix $W$ is found such that $w_{ji} > 0$ for each edge $(v_j,v_i) \in E$, $w_{ji}=0$ if $(v_j,v_i) \notin E$, and $S_j^+ = S_j^-$ for every $v_j \in V$.  
\item[(ii)] The weighted digraph becomes bistochastic in a distributed fashion; i.e., a weight matrix $W$ is found with nonnegative diagonal elements $w_{jj} \geq 0$, such that $w_{ji} > 0$ if $(v_j,v_i) \in E$, $w_{ji}=0$ if $(v_j,v_i) \notin E$, $v_j \neq v_i)$, and $w_{jj}+S_j^+ = w_{jj}+S_j^{-} =1$ for every $v_j \in V$. 
\end{enumerate}

%
%
%
%

\section{Distributed algorithm for weight-balancing}\label{sec:resultsA}

\subsection{Description of the algorithm}

Balancing a weighted digraph can be accomplished via a variety of algorithms. We introduce and analyse a distributed cooperative algorithm that exhibits asymptotic convergence and outperforms existing algorithms suggested in the literature \cite{2008:Cortes}. The algorithm achieves weight-balance as long as the underlying digraph is strongly connected (or is a collection of strongly connected digraphs, a necessary and sufficient condition for balancing to be possible \cite{2010:Cortes}). The rate of convergence of the algorithm is geometric and depends exclusively on the structure of the given diagraph and some constant parameters chosen by the nodes.

\emph{Algorithm~1} is an iterative algorithm in which the nodes distributively adjust the weights of their outgoing links such that the digraph becomes asymptotically weight-balanced. We assume that each node observes but cannot set the weights of its incoming links. Given a strongly connected digraph $\mathcal{G}(\mathcal{V}, \mathcal{E})$, the distributed algorithm has each node initialize the weights of all of its outgoing links to unity, i.e., $w_{lj}[0]=1$, $\forall v_l \in \mathcal{N}^{+}_j$. Then, it enters an iterative stage where each node performs the following steps:
\begin{enumerate}
\item It computes its weight imbalance defined by 
\begin{align}\label{eq:imbalance}
x_j[k]\triangleq S_j^-[k]-S_j^+[k].
\end{align}
\item If it is positive (resp. negative), all the weights of its outgoing links are increased (resp. decreased) by an equal amount and proportionally to $x_j[k]$. 
\end{enumerate} 
We discuss why the above distributed algorithm asymptotically leads to weights that balance the graph (and also characterize its rate of convergence) after we describe the algorithm in more detail. For simplicity, we assume that during the execution of the distributed algorithm, the nodes update the weights on their outgoing links in a synchronous\footnote{Even though we do not discuss this issue in the paper, asynchronous operation is not a problem.} fashion. Also note that 2) above implies that $w_{lj}$ for $v_l \in \mathcal{N}_j^+$ will always have the same value.

\begin{algorithm}
\caption{Weight balancing algorithm}
\textbf{Input:} A strongly connected digraph $\mathcal{G}(\mathcal{V}, \mathcal{E})$ with $n=|\mathcal{V}|$ nodes and $m=|\mathcal{E}|$ edges (and no self-loops).\\
\textbf{Initialization:} Each node $v_j\in \mathcal{V}$ \\
\noindent 1) Sets $w_{lj}[0]=1$, $\forall v_l \in \mathcal{N}^{+}_j$. \\
\noindent 2) Chooses $\beta_j\in (0,1)$. \\ 
\textbf{Iteration:} For $k=0,1,2, \ldots$, each node $v_j \in \mathcal{V}$ updates the weights of each of its outgoing links $w_{lj}$, $\forall v_l \in \mathcal{N}^{+}_j$, as
\begin{align}
w_{lj}[k+1]= w_{lj}[k] +\beta_j  \left(\frac{S_j^-[k]}{{\mathcal{D}}^{+}_j} - {w_{lj}[k]}\right). \label{eq:1}
\end{align}
\label{algorithm:1}
\end{algorithm}

The intuition behind the proposed algorithm is that we compare $S_j^-[k]$ with ${S_j^+[k]}={D}^{+}_j{w_{ij}[k]}$. If ${S_j^+[k]}>{S_j^-[k]}$ (resp. ${S_j^+[k]}<{S_j^-[k]}$), then the algorithm reduces (resp. increases) the weights on the outgoing links.

\begin{prop}
\label{prop:1}
If a digraph is strongly connected or is a collection of strongly connected digraphs, \emph{Algorithm~\ref{algorithm:1}} reaches a steady state weight matrix $W^{*}$ that forms a weight-balanced digraph, with geometric convergence rate equal to 
$$
R_{\infty}(P)=-\ln \delta(P), 
$$
where
$$
\delta(P)\triangleq \max\{|\lambda|: \lambda\in \sigma(P)),\lambda \neq 1\}.
$$
\end{prop}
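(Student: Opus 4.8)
The plan is to collapse the per-edge update \eqref{eq:1} into a linear recursion on an $n$-dimensional vector and then to study the associated iteration matrix $P$ by Perron--Frobenius arguments. First I would note that, since node $v_j$ sets $w_{lj}[0]=1$ for all $v_l\in\mathcal{N}_j^+$ and the right-hand side of \eqref{eq:1} does not depend on $v_l$, the weights $w_{lj}[k]$ with $v_l\in\mathcal{N}_j^+$ stay equal to a common value $p_j[k]$ at every step. Then $S_j^+[k]=\mathcal{D}_j^+p_j[k]$, and since the weight $w_{ji}[k]$ of edge $(v_j,v_i)$ is an outgoing weight of node $v_i$ we get $S_j^-[k]=\sum_{v_i\in\mathcal{N}_j^-}p_i[k]$. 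Substituting into \eqref{eq:1} yields
\begin{align*}
p_j[k+1]=(1-\beta_j)\,p_j[k]+\frac{\beta_j}{\mathcal{D}_j^+}\sum_{v_i\in\mathcal{N}_j^-}p_i[k],
\end{align*}
that is, $p[k+1]=Pp[k]$ with $p[0]=\mathbb{1}$, where $P=I+BL$, $B\triangleq\diag(\beta_j/\mathcal{D}_j^+)$, $L\triangleq A^--D^+$ with $(A^-)_{ji}=1$ iff $v_i\in\mathcal{N}_j^-$, and $D^+\triangleq\diag(\mathcal{D}_j^+)$. Since each $\beta_j\in(0,1)$, $P$ is nonnegative with strictly positive diagonal; moreover $P_{ji}>0$ for $i\neq j$ precisely when $\varepsilon_{ji}\in\mathcal{E}$, so a short path argument using strong connectivity of $\mathcal{G}$ shows $P$ is irreducible, and its positive diagonal makes it primitive.

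Next I would locate the relevant part of the spectrum of $P$. The key elementary fact is that every column of $L$ sums to zero: in column $i$ the diagonal term $-\mathcal{D}_i^+$ cancels the $\mathcal{D}_i^+$ unit entries contributed by the out-neighbors of $v_i$. Thus $\mathbb{1}^\top L=0$, i.e.\ $\mathbb{1}^\top A^-=\mathbb{1}^\top D^+=d^\top$ with $d\triangleq(\mathcal{D}_1^+,\ldots,\mathcal{D}_n^+)^\top$, and left-multiplying $M\triangleq(D^+)^{-1}A^-$ by $d^\top=\mathbb{1}^\top D^+$ gives $d^\top M=d^\top$. Hence the nonnegative irreducible matrix $M$ has the strictly positive left eigenvector $d$ for eigenvalue $1$, so by Perron--Frobenius $\rho(M)=1$ and there is a strictly positive right eigenvector $p^*>0$ with $Mp^*=p^*$, i.e.\ $Lp^*=0$ and therefore $Pp^*=p^*$. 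Since $P$ is primitive and admits the positive eigenvector $p^*$ for eigenvalue $1$, Perron--Frobenius forces $1=\rho(P)$ to be a simple eigenvalue and the unique eigenvalue of maximum modulus, so $\delta(P)<1$.

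It then remains to extract convergence and its rate. Primitivity and simplicity of the Perron root give $P^k\to\Pi\triangleq p^*q^\top/(q^\top p^*)$, where $q>0$ solves $q^\top P=q^\top$; since $(P-\Pi)^k=P^k-\Pi$ for $k\ge1$ and $\rho(P-\Pi)=\delta(P)$, Gelfand's formula gives $\|P^k-\Pi\|^{1/k}\to\delta(P)$. Therefore $p[k]=P^k\mathbb{1}\to p^\infty\triangleq(q^\top\mathbb{1}/q^\top p^*)\,p^*$, a strictly positive vector, and the convergence is geometric with rate $R_\infty(P)=-\ln\delta(P)$. The limiting weight matrix $W^*$ --- with $w^*_{lj}=p^\infty_j$ for $v_l\in\mathcal{N}_j^+$ and $w^*_{ji}=0$ otherwise --- is strictly positive on every edge of $\mathcal{G}$, and $Lp^\infty=0$ is precisely $S_j^-=S_j^+$ for all $v_j$, so $W^*$ forms a weight-balanced digraph. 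If $\mathcal{G}$ is a disjoint union of strongly connected digraphs there are no inter-component edges, $P$ is block diagonal, and the argument applies blockwise with $\delta(P)$ the maximum over the blocks.

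I expect the second step to be the main obstacle: showing that $1$ is genuinely the Perron root of $P$ --- equivalently, that $\mathcal{G}$ admits a positive weight-balancing assignment of the restricted form in which each node puts one and the same weight on all of its outgoing edges. The neat shortcut is the observation $\mathbb{1}^\top L=0$, which after left-multiplication by $D^+$ becomes $d^\top M=d^\top$; recognizing that this single linear-algebraic identity both pins $\rho(P)=1$ and (via Perron--Frobenius applied to $M$) supplies the positive right eigenvector $p^*$ is, I believe, the heart of the argument.
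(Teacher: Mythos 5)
Your proof is correct, and its overall skeleton matches the paper's: collapse the update to a single common outgoing weight per node, write the iteration as $w[k+1]=Pw[k]$, establish that $P$ is primitive with $\rho(P)=1$, and read off the geometric rate $-\ln\delta(P)$ from the second-largest modulus. Where you diverge is the key spectral step. The paper proves $\rho(P)=1$ by exhibiting the column-stochastic matrix $\bar{P}=I-B+AD^{-1}B$ and showing $P=D^{-1}B\bar{P}B^{-1}D$, so that $\rho(P)=\rho(\bar{P})=1$ by similarity; you instead observe that the columns of $L=A^{-}-D^{+}$ sum to zero, deduce $d^{\top}M=d^{\top}$ for $M=(D^{+})^{-1}A^{-}$, invoke Perron--Frobenius to obtain a strictly positive right eigenvector $p^{*}$ with $Lp^{*}=0$, and use the standard fact that an irreducible nonnegative matrix with a positive eigenvector for $\lambda$ has $\lambda=\rho$. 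Both routes rest on the same combinatorial identity (column sums of the adjacency matrix equal out-degrees, $\mathbb{1}^{\top}A=\mathbb{1}^{\top}D^{+}$), so they are two faces of one argument: the paper's left-eigenvector/similarity version is shorter, while your right-eigenvector version buys something the paper leaves implicit, namely an explicit description of the limit $p^{\infty}$ proportional to $p^{*}$, from which strict positivity of the limiting edge weights and the balance condition $Lp^{\infty}=0$ (i.e.\ $S_j^{-}=S_j^{+}$) follow directly rather than being asserted. Your handling of the disconnected-union case by block-diagonality is also a detail the paper omits.
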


\begin{proof}
First, we observe from equation \eqref{eq:1} that all the outgoing links have the same weight, i.e., $w_{l'j}=w_{lj}$, $\forall v_{l'}, v_l \in \mathcal{N}^{+}_j$ (because they are equal at initialization and they are updated in the same fashion). Hence, from hereafter, we will denote the weight on any outgoing link of node $v_j$ as $w_j$. In order to study the system with update formula \eqref{eq:1} for each node in the graph, we define $w=(w_1 \ \  w_2 \ \ \ldots \ \ w_n)^T$ with $w_j=w_{lj}$ $(v_l\in\mathcal{N}_j^{+})$, and thus we can write the evolution of the weights in matrix form as follows.
\vspace{-0.1cm}
\begin{align}
w[k+1]= P w[k], w[0]=w_0  \label{eq:1matrix}
\end{align}
where $w_0 =\mathbb{1}$ and
\vspace{-0.1cm}
\begin{align}\label{eq:P}
P_{ji}\triangleq 
\begin{cases} 
1-\beta_j , & \text{if $i=j$,} \\
{\beta_{j}}/{{D}^{+}_j} , &\text{if $v_i \in \mathcal{N}^{-}_j$.}
\end{cases}
\end{align}
It should be clear from the above update equation that the weights remain nonnegative during the execution of the algorithm.

Matrix $P$ can be written as $P=I-B+BD^{-1}A$, where $I$ is the identity matrix, $B=\diag(\beta_j)$, $D=\diag(D^{+}_j)$ and $A$ is the adjacency matrix with $a_{ji} =1$ if $\varepsilon_{ji} \in \mathcal{E}$, and $a_{ji} =0$ otherwise. Since $\sigma(D^{-1}A)=\sigma(AD^{-1})$, then $\rho(D^{-1}A)=\rho(AD^{-1})$. In addition, $\rho(AD^{-1})=1$ because matrix $AD^{-1}$ is column stochastic. As a result, $\rho(D^{-1}A)=1$. Also, note that $\bar{P}\triangleq I-B+AD^{-1}B$ is column stochastic and therefore $\rho(\bar{P})=1$. Furthermore, 
\begin{align*}
\rho(\bar{P})&=\rho(\bar{P}B^{-1}DD^{-1}B)=\rho(D^{-1}B\bar{P}B^{-1}D) \\
&=\rho(D^{-1}B(I-B+AD^{-1}B)B^{-1}D) \\
&=\rho(I-B+BD^{-1}A)=\rho(P)=1.
\end{align*}

Since $P$ is a nonnegative matrix, we can ask whether $P$ is primitive,\footnote{A nonnegative matrix is said to be primitive if it is irreducible and has only one eigenvalue of maximum modulus \cite{1985:matrix}.} i.e., whether $P^{m}>0$ for some $m\geq1$. Since the digraph is strongly connected for $0<\beta_j<1$, $\forall v_j \in  \mathcal{V}$ and all the main diagonal entries are positive, we easily conclude that $m\leq n-1$ \cite[\emph{Lemma 8.5.5}]{1985:matrix} and $P$ is primitive. Hence, there is no other eigenvalue with modulus equal to the spectral radius. A sufficient condition for primitivity is that a nonnegative irreducible matrix $A$ has at least one positive diagonal element \cite[Example 8.3.3]{2000:meyer}, which means that some $\beta_j$ can also be set at unity (as long as at least one is strictly smaller than unity). Hence, iteration \eqref{eq:1matrix} has a geometric convergence rate $R_{\infty}(P)=-\ln \delta(P)$, where
$\delta(P)\triangleq \max\{|\lambda|: \lambda\in \sigma(P)),\lambda \neq 1\}$.
In other words, $\delta(P)$ is the second largest of the moduli of the eigenvalues of $P$ (see also \cite{2003:jadbabaie_coordination,2004:Murray, 2005:Moreau}).
\end{proof}

\begin{remark}
If we change the coordinates of $w[k]$ by introducing $z[k]=B^{-1}D w[k]$, then equation \eqref{eq:1matrix} becomes
$z[k+1]=B^{-1}DPD^{-1}Bz[k]=\bar{P}z[k]$,
which has total mass ($\mathbb{1}^T z[k]$) that remains invariant throughout the iteration (and equal to $\mathbb{1}^T z[0]=\mathbb{1}^T B^{-1}D \mathbb{1}$).
\end{remark}

\subsection{Illustrative Example}\label{subsec:example1}

In this illustrative example (borrowed from \cite{2010:Cortes}), we demonstrate the validity of the proposed algorithm in the network depicted in Figure \ref{example_digraph}. In our plots we are typically concerned with the absolute balance defined as
$\varepsilon[k]=\sum_{j=1}^n |x_j[k]|$,
where $x_j[k]$ is given in \eqref{eq:imbalance}. If weight balance is achieved, then $\varepsilon[k]=0$ and $x_j[k]=0$, $\forall v_j \in \mathcal{V}$.

\begin{figure}[h]
\begin{center}
\includegraphics[width=0.34\columnwidth]{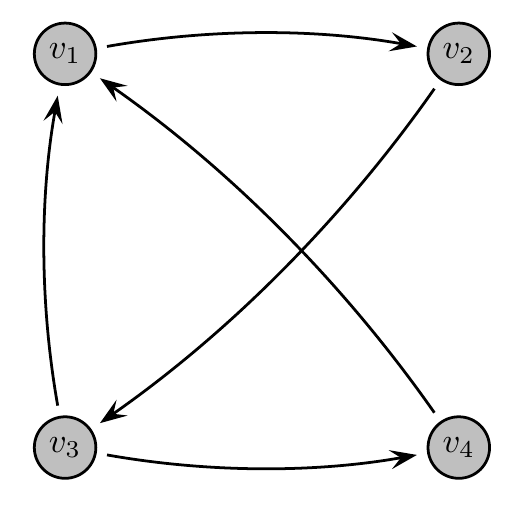}
\caption{A simple digraph which is weight-balanceable, but not bistochasticable due to the absence of self-loops. This example is given in \cite{2010:Cortes} in order to indicate that not all strongly connected digraphs are bistochasticable. }
\label{example_digraph}
\end{center}
\end{figure}

Figure \ref{graph_exampleA} shows the absolute balance of \emph{Algorithm~1} when $\beta_j=0.1, 0.5 \text{ and } 0.9$ for all $v_j\in \mathcal{V}$ as it evolves during the execution of the algorithm. These plots agree with the claims in Proposiion~\ref{prop:1} and validate that the algorithm convergences to a weight-balanced digraph with geometric convergence rate.
\begin{figure}[h!]
\begin{center}
\includegraphics[width=0.79\columnwidth]{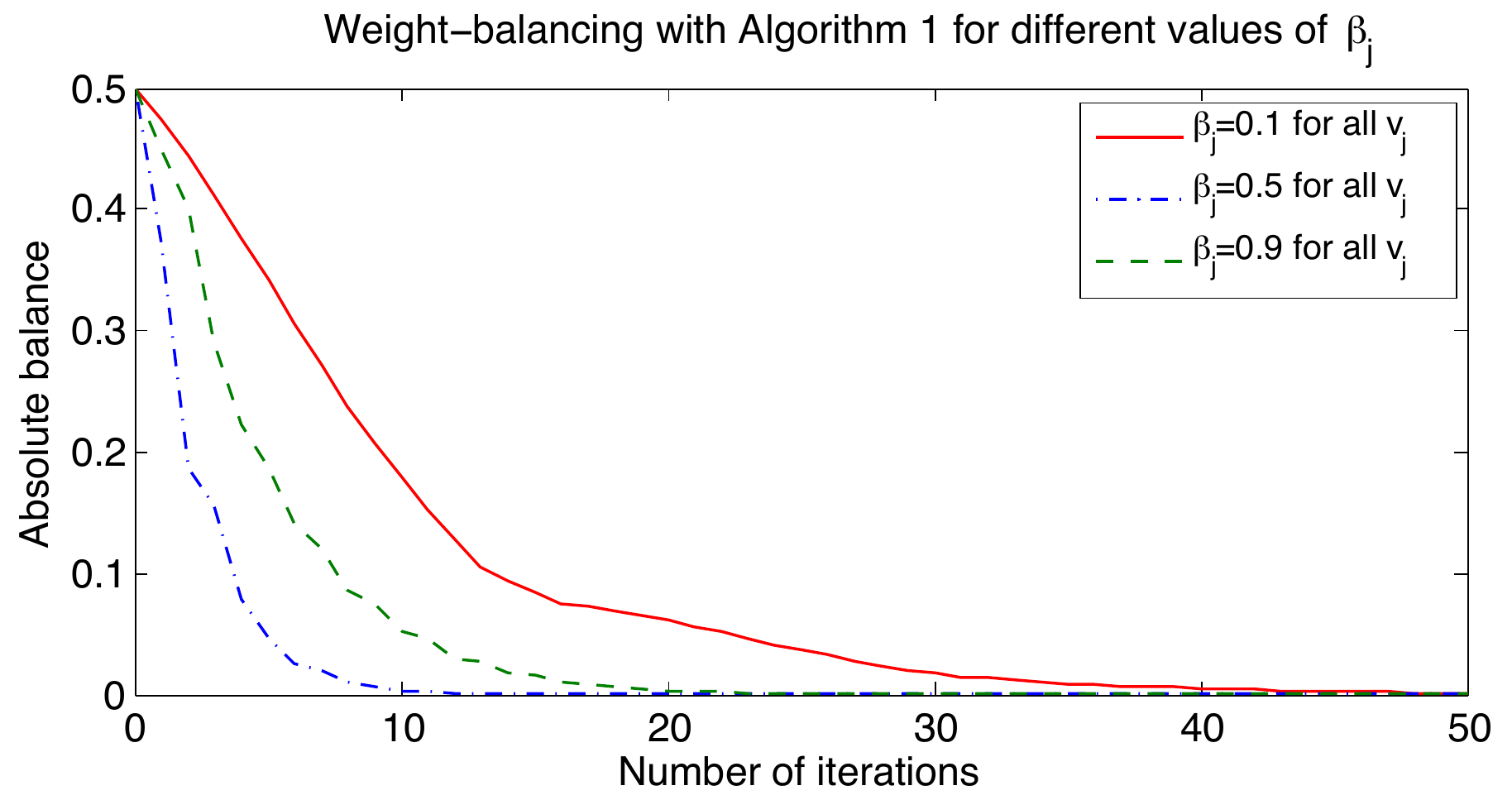}
\caption{Absolute balance for weight-balancing algorithm (\emph{Algorithm~1}) for the digraph depicted in Figure \ref{example_digraph}.}
\label{graph_exampleA}
\end{center}
\end{figure}
In this particular example, with the choice of $\beta_j=0.1, 0.5 \text{ and } 0.9$, the rate discussed in the proof of Proposition~\ref{prop:1} is given by $R_{\infty}\big(P(\beta_j)\big)=0.1204, 0.5180 \text{ and } 0.2524$, respectively. The rates of convergence are also illustrated in Figure~\ref{fig:rates1}, where the convergence rate of Algorithm~1 as obtained in Proposition~\ref{prop:1} and the actual convergence rate of the
algorithm characterized by matrix~\eqref{eq:P}. The theoretical and actual convergence rates coincide.
\begin{figure}[h!]
\begin{center}
\includegraphics[width=0.79\columnwidth]{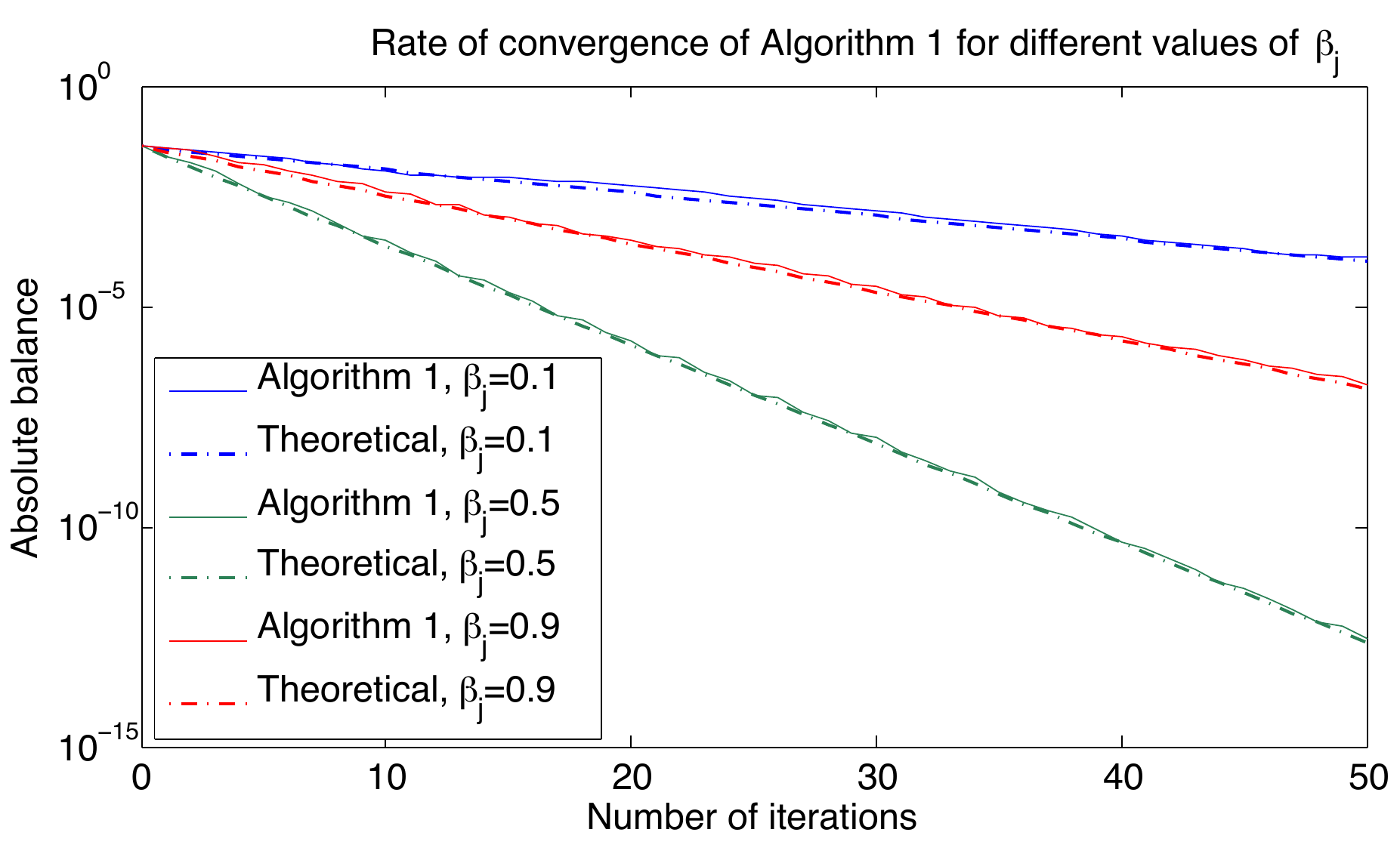}
\caption{Comparison of the theoretical rate of convergence as obtained in Proposition~\ref{prop:1} and the actual convergence rate of the weight-balancing algorithm (\emph{Algorithm~1}) for the digraph depicted in Figure \ref{example_digraph}, for $\beta_j=0.1, 0.5 \text{ and } 0.9$. The dashed lines show the theoretical convergence rate, while the solid lines show the actual convergence rate of the algorithm.}
\label{fig:rates1}
\end{center}
\end{figure}

The final weight-balanced digraph $W^{\star}$ is the same for all three cases mentioned above (because all the $\beta_j$ are identical --- but equal to a different constant each time) and is given by
\begin{align*}
W^{\star}=\begin{bmatrix} 
0 		& 0 		& 0.7143 	& 0.7143 	\\
1.4286 	& 0 		& 0 		& 0		\\
0 		& 1.4286 	& 0 		& 0 		\\
0 		& 0 		&  0.7143 	& 0
\end{bmatrix}.
\end{align*}

\begin{remark}
If $\beta_j=1$, $\forall v_j \in  \mathcal{V}$, then the weighted adjacency matrix $P$ is not necessarily primitive and hence the algorithm does not converge to weights that form a weight-balanced digraph. We can see this in the counterexample depicted in Figure \ref{fig:counter}.
\begin{figure}[h]
\begin{center}
\includegraphics[width=0.34\columnwidth]{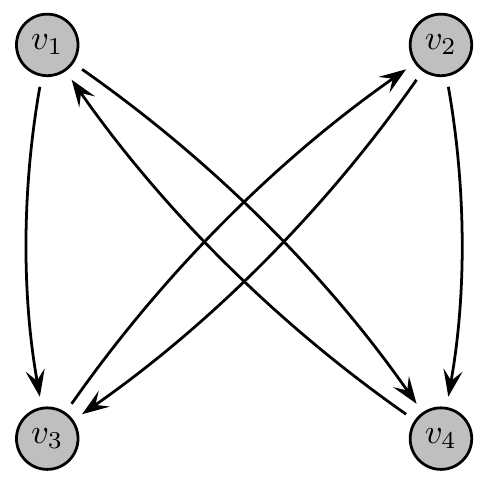}
\caption{A simple digraph which is weight-balanceable, but the adjacency matrix is not primitive.}
\label{fig:counter}
\end{center}
\end{figure}
We run \emph{Algorithm 1} for three different cases: (a) $\beta_j=1$, $\forall v_j \in  \mathcal{V}$, (b) $\beta_j=0.9$, $\forall v_j \in  \mathcal{V}$ and (c) $\beta_1=0.9$ and $\beta_j=1$, $\forall v_j \in  \mathcal{V}-\{v_1\}$. It can be seen from Figure \ref{counterexample} that for the first case the matrix is not primitive and it does not converge, whereas for the other two cases it asymptotically converges (because as long as one of the nodes has $\beta_j<1$, then the update matrix is primitive and the algorithm forms a weight-balanced digraph).
\begin{figure}[h]
\begin{center}
\includegraphics[width=0.79\columnwidth]{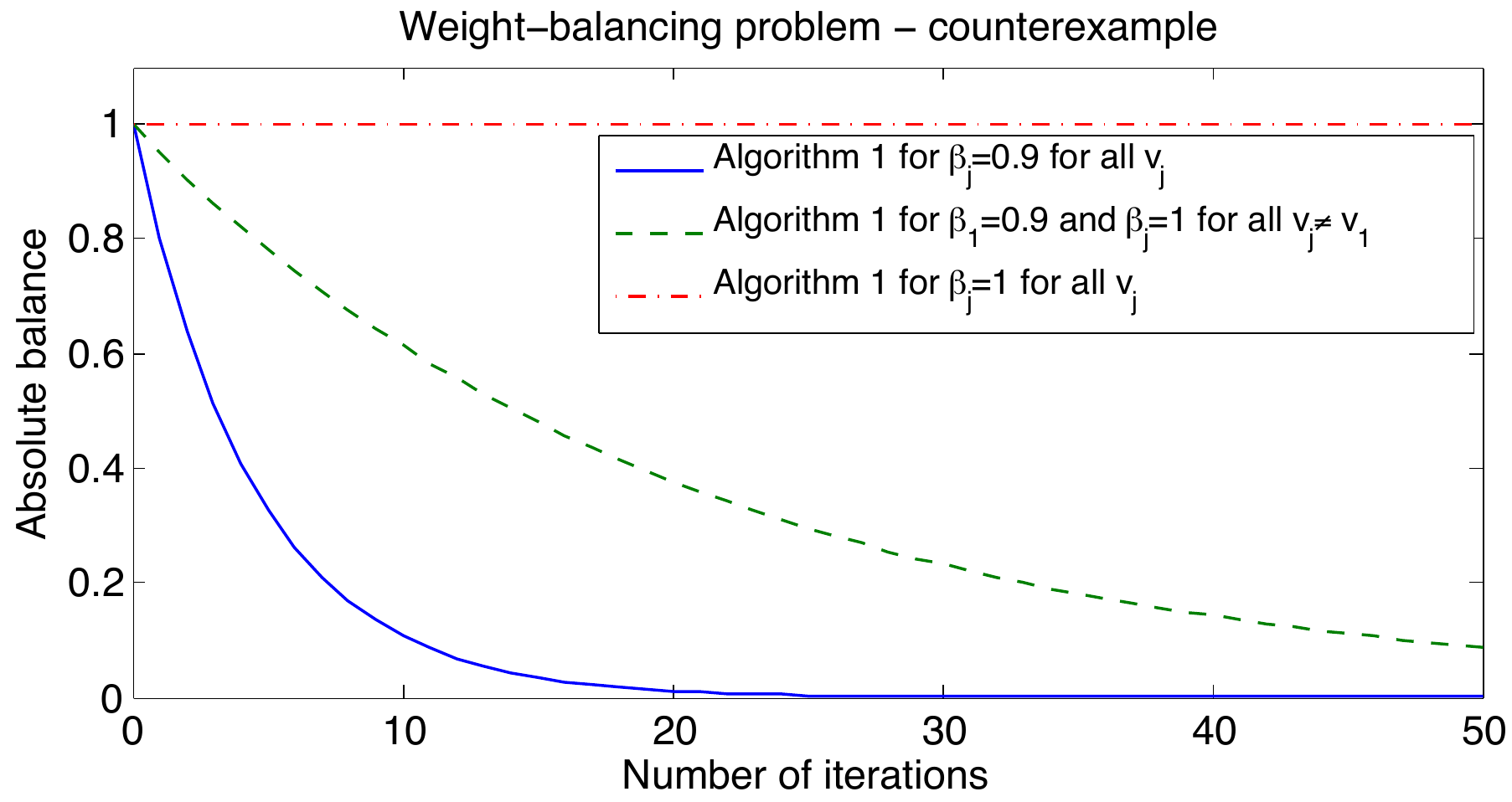}
\caption{A simple digraph which is weight-balanceable,  but when $\beta_j=1$, $\forall v_j \in  \mathcal{V}$, the weighted adjacency matrix is not primitive and hence the algorithm does not converge to weights that form a weight-balanced digraph.}
\label{counterexample}
\end{center}
\end{figure}
%
\end{remark}

\section{Distributed algorithm for bistochastic matrix formation}\label{sec:resultsB}

\subsection{Description of the algorithm}

An algorithm is proposed, herein called \emph{Algorithm~2}, with which a bistochastic adjacency matrix is formed in a distributed fashion.
One extra requirement for \emph{Algorithm~2}, however, is to maintain column stochasticity of the weighted adjacency matrix $W[k]$ for all times $k$, so that it can be used for consensus problems simultaneously from the beginning without the need to form the bistochastic matrix before running an algorithm. More specifically, we obtain a sequence of column stochastic matrices $W[0], W[1], W[2], \ldots, W[k]$ such that $\lim_{k\rightarrow \infty} W[k] = W$ is bistochastic and thus the iteration
$x[k+1]=W[k]x[k+1],~x[0]=x_0$,
reaches average consensus asymptotically \cite{2011:Christoforos}. 

Digraphs that are weight-balanceable do not necessarily admit a doubly stochastic assignment \cite[\emph{Theorem 4.1}]{2010:Cortes}. However, if self-weights are added then 
any strongly connected graph is bistochasticable after adding enough self-loops \cite[\emph{Corollary 4.2}]{2010:Cortes}. 
\emph{Algorithm~2} overcomes this problem with the introduction of nonzero self weights $w_{jj}$ at each node $v_j \in \mathcal{V}$, and their appropriate adjustment in a distributed manner. \emph{Algorithm~2} is described in detail below. 

\begin{algorithm}
\caption{Bistochastic matrix formation algorithm}
\textbf{Input:} A strongly connected digraph $\mathcal{G}(\mathcal{V}, \mathcal{E})$ with $n=|\mathcal{V}|$ nodes and $|\mathcal{E}|$ edges (and no self-loops).\\
\textbf{Initialization:} Set $w_{lj}[0]=1/(1+ \mathcal{D}_j^+)$, $\forall v_l \in \mathcal{N}^{+}_j \cup \{ j \}$. \\
\textbf{Iterations:} For $k=0,1,2, \ldots$, each node $v_j\in V$ updates the weights $w_{lj}$, $v_l \in \mathcal{N}_j^+$, by performing the following steps: \\
\noindent 1) It chooses $\beta_j[k]$ as follows:
\begin{align}\label{eq1:beta}
\beta_j[k] = 
\begin{cases}
\alpha_j\frac{1-S_j^+[k]}{S_j^-[k]-S_j^+[k]}, &  S_j^-[k] > S_j^+[k], \\
\alpha_j,  & \text{otherwise},
\end{cases}
\end{align}
where $\alpha_j\in(0,1)$.\\
\noindent 2) It updates
\begin{align}
w_{lj}[k+1]= w_{lj}[k] +\beta_j[k]  \left(\frac{S_j^-[k]}{{D}^{+}_j} - {w_{lj}[k]}\right), \label{eq1:1}
\end{align}
for all $v_l \in \mathcal{N}_j^{+}$. [This is the same update as \emph{Algorithm~1}, with the difference that the proportionality constant $\beta_j$ can be adapted at each time step $k$, and is chosen so that $S^+_j[k+1] \leq 1$ (so as to ensure that $w_{jj}$ can be chosen in Step 3 to be nonnegative and satisfy $w_{jj} + S^+_j[k+1] =1$).] \\ 
\noindent 3) Then, $w_{jj} \geq 0$ is assigned so that the weighted adjacency matrix retains its column stochasticity; i.e.,
\begin{align}
w_{jj}[k+1] = 1 - \sum_{i \in \mathcal{N}_j^+} w_{ij}[k+1],~\forall v_j \in \mathcal{V}.
\end{align}
\label{algorithm:2}
\end{algorithm}
\begin{prop}\label{prop:2}
If a digraph is strongly connected or is a collection of strongly connected digraphs, then \emph{Algorithm~\ref{algorithm:2}} reaches a steady state weight matrix $W^{*}$ that forms a bistochastic digraph. Furthermore, the weights of all edges in the graph are nonzero.
\end{prop}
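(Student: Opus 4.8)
The plan is to run the same reduction as in Proposition~\ref{prop:1} and then treat Algorithm~\ref{algorithm:2} as a switched, state-constrained version of Algorithm~\ref{algorithm:1}. Since the initialization and the update \eqref{eq1:1} are identical across the outgoing links of a node, those links keep a common value; call it $w_j[k]$, so $S_j^+[k]=\mathcal{D}_j^+ w_j[k]$. Substituting the two branches of \eqref{eq1:beta} into \eqref{eq1:1} and writing everything in terms of $S_j^+$ gives
\begin{align*}
S_j^+[k+1]=
\begin{cases}
(1-\alpha_j)S_j^+[k]+\alpha_j S_j^-[k], & S_j^-[k]\le S_j^+[k],\\
(1-\alpha_j)S_j^+[k]+\alpha_j, & S_j^-[k]> S_j^+[k],
\end{cases}
\end{align*}
that is $S_j^+[k+1]=S_j^+[k]+\alpha_j g_j[k]$, with $g_j[k]=x_j[k]$ when $x_j[k]\le 0$ and $g_j[k]=1-S_j^+[k]=w_{jj}[k]$ when $x_j[k]>0$, where $x_j[k]\triangleq S_j^-[k]-S_j^+[k]$ is the imbalance. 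From this I would prove by induction on $k$ the invariants $0<S_j^+[k]<1$ and $w_j[k]>0$ for every $j$: in the first branch $S_j^+[k+1]$ is a convex combination of $S_j^+[k]\in(0,1)$ and $S_j^-[k]\ge 0$, in the second a convex combination of $S_j^+[k]$ and $1$, and strong connectivity gives $S_j^-[k]>0$, which propagates strict positivity of the edge weights. Hence $w_{jj}[k+1]=1-S_j^+[k+1]\in(0,1)$ is a valid nonnegative self-weight, Step~3 is always executable, $W[k]$ is column stochastic for all $k$, and the $j$-th row sum of $W[k]$ equals $1+x_j[k]$ — so $W[k]$ is bistochastic exactly when $x[k]=0$.

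Second, I would pin down the equilibria. With $S^+[k]=(S_1^+[k],\dots,S_n^+[k])^{T}$ and $C=AD^{-1}$ (column stochastic, and primitive by strong connectivity up to the usual diagonal argument, with simple Perron value $1$ and positive right eigenvector $\pi$), one has $S^-[k]=CS^+[k]$, hence $x[k]=(C-I)S^+[k]$ and the conservation law $\mathbb{1}^{T}x[k]=0$ (because $\mathbb{1}^{T}(C-I)=0$). At any fixed point $S^{+*}$, $g_j^*=0$ for every $j$, so for each $j$ either $x_j^*=0$ or ($x_j^*>0$ and $S_j^{+*}=1$); summing and using $\mathbb{1}^{T}x^*=0$ forces $\{j:x_j^*>0\}=\emptyset$, and then $x_j^*\le 0$ together with $\sum_j x_j^*=0$ yields $x^*=0$. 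Thus every equilibrium is weight-balanced, hence (being column stochastic) bistochastic; moreover $x^*=0$ means $S^{+*}=c\pi$ for some $c\ge 0$, and the edge weights $w_j^*=c\pi_j/\mathcal{D}_j^+$ are strictly positive precisely when $c>0$, i.e.\ when the total weight $T^*=\mathbb{1}^{T}S^{+*}$ does not vanish.

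Third — the hard part — I would prove convergence to an equilibrium together with $c>0$. The weights stay in the compact box $\prod_j[0,1]$, but the update is only piecewise affine and actually discontinuous across each switching surface $x_j=0$ (a positively imbalanced node jumps its out-weight toward $1$, while a balanced node does not move), so a naive ``limit points are equilibria'' argument via continuity is unavailable. The plan is instead to construct a Lyapunov functional $V$ that is nonincreasing along \eqref{eq1:1} with a dissipation term lower-bounding the per-step drop by a multiple of $\|g[k]\|$ (ideally an eventually geometric decrease, which would simultaneously give $g[k]\to 0$, Cauchyness of $S^+[k]$, convergence to a single equilibrium, and the geometric rate hinted at in the introduction), and then conclude $x[k]\to 0$; the second step above then identifies the limit as bistochastic with positive edge weights once we also know $c>0$. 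Natural candidates for $V$ are the absolute imbalance $\varepsilon[k]=\|x[k]\|_1$, a Perron-weighted quadratic form in $x[k]$, or the total weight $T[k]=\mathbb{1}^{T}S^+[k]$ (which is a priori bounded by $n$); the estimate should exploit that deficient nodes contract $S_j^+$ toward their weighted in-average while excess nodes raise it, with strong connectivity coupling the two so that imbalance cannot persist. To rule out $c=0$ I would show $\inf_k\min_j w_j[k]>0$: once a node has been in the excess regime its out-weight is at least $\alpha_j$, in any unbalanced configuration at least one node is in excess (since $\mathbb{1}^{T}x[k]=0$), and strong connectivity spreads a positive floor around the network. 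I expect constructing $V$ to be the main obstacle, precisely because of the discontinuity and the overshoot in the excess branch: the obvious candidate $\|x[k]\|_1$ turns out to be only eventually — not step-by-step — monotone, so a state-dependent or windowed refinement seems necessary; the lower bound on the total weight is a secondary technicality.
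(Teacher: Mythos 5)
Your reduction to the scalar recursion for $S_j^+$ is correct (both branches match what the paper derives: $S_j^+[k+1]=(1-\alpha_j)S_j^+[k]+\alpha_j S_j^-[k]$ in the deficit case and $(1-\alpha_j)S_j^+[k]+\alpha_j$ in the excess case), the invariant $0<S_j^+[k]<1$ is right and is exactly how the paper justifies that the self-weight in Step~3 is always nonnegative, and your equilibrium analysis (writing $S^-=AD^{-1}S^+$, using $\mathbb{1}^T x[k]=0$ to kill the excess set, and identifying equilibria with Perron vectors $c\pi$) is clean and in fact more explicit than anything in the paper. But the central claim of the proposition is \emph{convergence}, and on that point your proposal contains no proof: you state a plan to build a Lyapunov functional, you correctly observe that the dynamics are discontinuous across the switching surfaces $x_j=0$ and that the obvious candidate $\|x[k]\|_1$ is not step-by-step monotone, and you then acknowledge that constructing $V$ is an unresolved obstacle. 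Characterizing the fixed points of a discontinuous switched system does not yield convergence to one of them, and your argument for the weight floor $c>0$ (``once a node has been in the excess regime its out-weight is at least $\alpha_j$'') only holds for the step immediately following an excess update, since subsequent deficit updates can pull $S_j^+$ back down. So the proof is genuinely incomplete where it matters most.

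For comparison, the paper does not go the Lyapunov route at all. It keeps the iteration in matrix form $w[k+1]=P[k]w[k]$, where each $P[k]$ has diagonal $1-\beta_j[k]$ and off-diagonal entries $\beta_j[k]/\mathcal{D}_j^+$, and the whole point of its case analysis is to show that the adaptive gain never degenerates: in the excess branch $S_j^+[k+1]=(1-\alpha_j)S_j^+[k]+\alpha_j<1$, so $\beta_j[k]$ stays strictly positive (and can be kept above a fixed $\epsilon>0$ by taking $\alpha_j$ close to $1$). Each $P[k]$ is then similar, via $D^{-1}B$, to a column-stochastic primitive matrix, and convergence follows from results on backward products of such matrices rather than from a decreasing energy function. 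That route sidesteps the discontinuity problem that blocks your plan, because the time-varying product argument only needs uniform lower bounds on the entries of $P[k]$, not monotonicity of any scalar functional. If you want to complete your version, the missing ingredient is precisely such a product-of-stochastic-matrices (weak ergodicity) argument, or else an actual construction of the windowed Lyapunov function you conjecture.
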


\begin{proof}
As before, all the outgoing links have the same weight, i.e., $w_{l'j}=w_{lj}$, $\forall v_{l'}i, v_l \in \mathcal{N}^{+}_j$. Thus, the evolution of the weight $w_j\triangleq w_{lj}$, $\forall v_l \in \mathcal{N}_j^{+}$, can be written in matrix form as follows.
\begin{align}
w[k+1]= P[k] w[k], w[0]=w_0  \label{eq:2matrix}
\end{align}
where 
\begin{align}\label{eq1:P}
P_{ji}[k]\triangleq 
\begin{cases} 
1-\beta_j[k] & \text{if $i=j$,} \\
{\beta_{j}[k]}/{{D}^{+}_j} &\text{if $i \in \mathcal{N}^{-}_j$.}
\end{cases}
\end{align}
In order to make sure that the sum of each column can be made one by choosing a nonnegative self-weight $w_{jj}$, we need to establish (for all $k$) that $S^+_j[k+1] \leq 1$ or $w_{lj}[k+1] \leq 1/D_j^{+}$, for all $v_l \in \mathcal{N}_j^+$. In our updates, there are two cases: (a) $S_j^-[k] \leq S_j^+[k]$, and (b) $S_j^-[k] > S_j^+[k]$; we analyze both cases below. \\
(a) When $S_j^-[k] \leq S_j^+[k]$, then $\beta_j[k]$ can be chosen to be any value $\alpha_j$, $\alpha_j\in(0,1)$. Then, the algorithm is equivalent to Algorithm~1 and admits geometric convergence rates. \\
(b) When $S_j^-[k] > S_j^+[k]$, then $w_{lj}[k+1]$, $\forall v_l \in \mathcal{N}_j^+$, are increased. In order to avoid having  $S^+_j[k+1] > 1$ or $w_{lj}[k+1] > 1/D_j^{+}$ for any $v_l \in \mathcal{N}_j^+$, we choose $\beta_j[k]$ as shown in equation \eqref{eq1:beta}. In this case, for any initial $S_j^{+}[0]$ for which $S_j^-[k] > S_j^+[k]$, from \eqref{eq1:1} we obtain that  
\begin{align*}
S_j^{+}[k+1]&=S_j^{+}[k]+\alpha_j(1-S^{+}_j[k]) \\
&=(1-\alpha_j)S_j^{+}[k]+\alpha_j.
\end{align*}
Hence, it is guaranteed that $S_j^{+}[k+1]<1$, and as a result, $\beta_j[k]\neq 0$ for all $k$. It can be easily shown by perfect induction that after $n$ steps (for which $S_j^-[k+r] > S_j^+[k+r]$ holds for all $r\in\mathbb{Z}_{+}, r\leq n$) we have
\begin{align*}
S_j^{+}[k+n]=1-(1-\alpha_j)^n(1-S^{+}_j[k]),
\end{align*}
that approaches 1 as $n \rightarrow \infty$. Note that someone can choose $\alpha_j\in (0,1)$ closer to $1$ and guarantee that $\beta_j[k]>\epsilon$, $\epsilon>0$, for all the iterations $k$.

\if 0

Now, we have a discrete-time switching dynamical system whose stability condition and convergence rate are associated with the 
joint spectral radius\footnote{For a finite (or more generally compact) set of matrices $\mathcal{A}=\{A_1,A_2,\ldots,A_n\} \subset \mathds{R}^{m\times m}$, the joint spectral radius is given by $\rho(\mathcal{A})=\lim_{k\rightarrow \infty}\sup \left\{ \|A_{i1}A_{i2}\ldots A_{ik} \|^{1/k}: A_{i}\in \mathcal{A}\right\}$.}. Note that $P[k]=D^{-1}B\bar{P}[k] B^{-1}D$ and therefore
\begin{align*}
P[k]\ldots P[0] = D^{-1}B\bar{P}[k]\ldots \bar{P}[0] B^{-1}D
\end{align*} 
Since $\bar{P}[k]$ is column stochastic, then $\bar{P}[k]\ldots \bar{P}[0]$ is easily shown to be column stochastic and as a result, $\rho(\bar{P}[k]\ldots \bar{P}[0] )=1$. However,
\begin{align*}
\rho(\bar{P}[k]\ldots \bar{P}[0] )&=\rho(\bar{P}[k]\ldots \bar{P}[0] B^{-1}DD^{-1}B) \\
& = \rho(D^{-1}B\bar{P}[k]\ldots \bar{P}[0] B^{-1}D) \\
&=\rho(P[k]\ldots P[0] )=1.
\end{align*}
Since $P[k]$ corresponds to a strongly connected graph and has at least one nonzero diagonal element, then it is primitive and the product of matrices $P[k]P[k-1]\ldots P[1]P[0]$  consists of column stochastic and primitive matrices that converge to a doubly stochastic and primitive matrix $P_{ss}$ \cite{2011:Christoforos}. 
\fi 
\end{proof}

Unlike the case of Algorithm~1, in Algorithm~2 we have a discrete-time switching dynamical system whose stability condition and convergence rate are not as easily characterized. As indicated by the various simulations we have tried (as well as the special case identified by the remark below), it is very likely that the rate of convergence of Algorithm~2 is geometric, however, this has not been formally established thus far.

\begin{prop}\label{prop:3}
If a digraph is strongly connected or is a collection of strongly connected digraphs, then \emph{Algorithm~\ref{algorithm:2}} with initial condition $w_{Lj}[0]=\frac{1}{m(1+ \mathcal{D}_j^+)}$, $\forall v_L \in \mathcal{N}^{+}_j, m\geq |\mathcal{V}|$, reaches a steady state weight matrix $W^{*}$ that forms a bistochastic digraph, with geometric convergence rate equal to $R_{\infty}(P)=-\ln \delta(P)$, where 
\begin{align*}\label{eq:Pa}
P_{ji}[k]\triangleq 
\begin{cases} 
1-\alpha_j & \text{if $i=j$,} \\
{\alpha_{j}}/{{D}^{+}_j} &\text{if $i \in \mathcal{N}^{-}_j$.}
\end{cases}
\end{align*}
Furthermore, the weights of all edges in the graph are nonzero.
\end{prop}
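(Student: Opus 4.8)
\emph{Proof proposal.}
The plan is to show that, under the rescaled initialization, Algorithm~\ref{algorithm:2} ceases to be a genuinely switched system and collapses onto the constant-gain linear iteration of Algorithm~\ref{algorithm:1} with $\beta_j\equiv\alpha_j$, so that Proposition~\ref{prop:1} applies verbatim and delivers both the geometric rate $-\ln\delta(P)$ and the strict positivity of the edge weights. As in the proofs of Propositions~\ref{prop:1} and~\ref{prop:2}, all outgoing weights of a node remain equal, so I write $w_j\triangleq w_{lj}$, $v_l\in\mathcal{N}_j^+$, track $w=(w_1,\dots,w_n)^T$ (equivalently $S_j^+[k]=D_j^+ w_j[k]$), and regard $w_{jj}[k]=1-S_j^+[k]$ as passive bookkeeping, legitimate as long as $S_j^+[k]\le 1$. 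The only point where Algorithm~\ref{algorithm:2} departs from Algorithm~\ref{algorithm:1} is Step~1 when $S_j^-[k]>S_j^+[k]$, where the gain is pulled away from $\alpha_j$ for the sole purpose of keeping $S_j^+[k+1]\le 1$. Hence it suffices to establish the invariant $S_j^-[k]\le 1$ for every node and every $k$: given it, already the choice $\beta_j[k]=\alpha_j$ yields $S_j^+[k+1]=(1-\alpha_j)S_j^+[k]+\alpha_j S_j^-[k]\le (1-\alpha_j)+\alpha_j=1$ (using $S_j^+[k]\le 1$ and $S_j^-[k]\le 1$), so the gain never has to be adjusted and the exact recursion is $w[k+1]=Pw[k]$ with the constant $P$ of the statement, $P=I-B+BD^{-1}A$, $B=\diag(\alpha_j)$, $D=\diag(D_j^+)$.

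The boundedness step $S_j^-[k]\le 1$ is where $m\ge|\mathcal{V}|$ enters, and it is the main obstacle. I would prove it by induction on $k$, jointly with the claim that the dynamics has been the linear one up to time $k$. At $k=0$, $S_j^+[0]=D_j^+/(m(1+D_j^+))<1/m$, so $\sum_j S_j^+[0]<n/m\le 1$; by the handshake identity $\sum_i S_i^-[k]=\sum_i S_i^+[k]$ (valid for any nonnegative weight assignment since both sides equal the total weight placed on $\mathcal{E}$) we get $S_j^-[0]\le\sum_i S_i^+[0]<1$. For the inductive step, with the linear dynamics holding up to time $k$ I invoke the conserved quantity from the Remark following Proposition~\ref{prop:1}: $\mathbb{1}^T B^{-1}D\,w[k]=\sum_j \alpha_j^{-1}S_j^+[k]$ is invariant, hence $\sum_j S_j^+[k]\le \alpha_{\max}\sum_j \alpha_j^{-1}S_j^+[0]$, which is $<1$ once $m\ge n$ --- exactly $\sum_j S_j^+[0]<n/m\le 1$ when the $\alpha_j$ are common, otherwise one should either take the $\alpha_j$ equal or inflate the threshold to $m\ge n\,\alpha_{\max}/\alpha_{\min}$, a caveat I would state explicitly. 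Then $S_j^-[k]\le\sum_i S_i^+[k]<1$ again by the handshake identity, so $\beta_j[k]=\alpha_j$ is in force at step $k$ as well, closing the induction.

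Once the iteration is identified with $w[k+1]=Pw[k]$, Proposition~\ref{prop:1} gives everything. The matrix $P$ is nonnegative with positive diagonal (because $\alpha_j<1$) and corresponds to the strongly connected $\mathcal{G}$, hence is primitive with $\rho(P)=1$ a simple, unique eigenvalue of maximum modulus, so $w[k]\to w^{*}$ geometrically at rate $R_{\infty}(P)=-\ln\delta(P)$, $\delta(P)=\max\{|\lambda|:\lambda\in\sigma(P),\lambda\neq 1\}$. Evaluating the fixed-point equation $Pw^{*}=w^{*}$ componentwise gives $(1-\alpha_j)S_j^{*+}+\alpha_j S_j^{*-}=S_j^{*+}$, i.e.\ $S_j^{*-}=S_j^{*+}$, so $W^{*}$ is weight-balanced; combined with the column stochasticity $w_{jj}[k]=1-S_j^+[k]\ge 0$ maintained at every step, this yields $w_{jj}^{*}+S_j^{*+}=w_{jj}^{*}+S_j^{*-}=1$, i.e.\ $W^{*}$ is bistochastic. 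Strict positivity of the edge weights follows as in Proposition~\ref{prop:2}: $w_j[k+1]=(1-\alpha_j)w_j[k]+\frac{\alpha_j}{D_j^+}S_j^-[k]\ge(1-\alpha_j)w_j[k]$, so $w_j[k]\ge(1-\alpha_j)^{k}w_j[0]>0$ for all $k$, and $w^{*}$, being a positive multiple of the Perron vector of the primitive matrix $P$, is strictly positive, whence $w_{ji}^{*}=w_i^{*}>0$ for every $(v_j,v_i)\in\mathcal{E}$. I expect the boundedness step of the second paragraph --- and pinning down the precise threshold on $m$ when the $\alpha_j$ differ --- to be the only delicate part; the remainder is bookkeeping plus a citation of Proposition~\ref{prop:1}.
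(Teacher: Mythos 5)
Your proposal is correct and follows the same overall route as the paper: show that the rescaled initialization keeps the out- and in-weight sums below one forever, so that $\beta_j[k]\equiv\alpha_j$, the switched system collapses to the constant iteration $w[k+1]=Pw[k]$ of Algorithm~1, and Proposition~1 delivers the geometric rate and the positivity of the limit. The difference is in the one step that actually carries the weight of the argument. The paper's proof establishes $S_j^{+}[0]<1$, $S_j^{-}[0]<1$ and then asserts that the convex-combination identity $S_j^{+}[k+1]=(1-\beta_j[k])S_j^{+}[k]+\beta_j[k]S_j^{-}[k]$ "guarantees" $S_j^{+}[k]<1$ and $\beta_j[k]=\alpha_j$ for all $k$; this presupposes $S_j^{-}[k]<1$ for all $k\ge 1$, which the paper never proves (the naive bound $w_i[k+1]\le 1/D_i^{+}$ gives only $S_j^{-}[k+1]\le\sum_{i\in\mathcal{N}_j^{-}}1/D_i^{+}$, which can exceed one). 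You correctly identify this as the delicate point and close it with the handshake identity $\sum_i S_i^{-}[k]=\sum_i S_i^{+}[k]$ together with the conserved mass $\mathbb{1}^{T}B^{-1}Dw[k]$, run as a joint induction with the claim that the dynamics has been linear up to time $k$. This is a genuine repair of the paper's argument, not just bookkeeping. Your further observation is also on target: the conservation bound degrades by $\alpha_{\max}/\alpha_{\min}$ when the gains are heterogeneous, so the stated threshold $m\ge|\mathcal{V}|$ is only directly justified for common $\alpha_j$ (or after inflating $m$ to $|\mathcal{V}|\,\alpha_{\max}/\alpha_{\min}$) --- a caveat the paper's statement and proof do not acknowledge. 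The remaining steps (fixed-point equation giving $S_j^{*-}=S_j^{*+}$, column stochasticity of each iterate giving bistochasticity of the limit, and positivity via the Perron vector of the primitive $P$) match the paper's intent and are carried out more explicitly than in the paper itself.
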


\begin{proof}
If it is possible for each node to know the number of nodes in the graph, or at least an upper bound, then we can set $w_{Lj}[0]=\frac{1}{m(1+ \mathcal{D}_j^+)}$ for all $v_L \in \mathcal{N}^{+}_j$, where $m\geq |\mathcal{V}|$. Hence, we establish that $S_j^{-}[0]<1$ and $S_j^{+}[0]<1$ for all $v_j\in \mathcal{V}$.
By equation \eqref{eq1:1} we have that 
$$
S_j^{+}[k+1]=(1-\beta_j[k])S_j^{+}[k]+\beta_j[k]S_j^{-}[k]
$$ 
which guarantees that $S_j^{+}[k]<1$ and that $\beta_j[k]=\alpha_j$, for all $k$. The self-weights are chosen so that we make each column sum up to one (and consequently the row sums, once the algorithm converges). Therefore, with these initial conditions we can choose $\beta_j[k]$ to be constant (and equal to $\alpha_j\in(0,1)$) throughout the execution of the algorithm, and hence, admit geometric convergence rate (because we are essentially back to Algorithm~1 with the only difference that we also calculate the value of the self-weight (equal to $1-\mathcal{D}_j^+ w_j[k]$) each time). Therefore, Algorithm~2 has a geometric convergence rate $R_{\infty}(P)=-\ln \delta(P)$, as in Algorithm~1.
\end{proof}

\begin{remark}
As before, if we change the coordinates of $w[k]$ by introducing $z[k]=B^{-1}D w[k]$, then equation \eqref{eq:2matrix} becomes
$z[k+1]=B^{-1}DP[k]BD^{-1}z[k]=\bar{P}[k]z[k]$,
which has total mass ($\mathbb{1}^T z[k]$) that remains invariant.
\end{remark}

\subsection{Illustrative Example}

We consider a random strongly connected graph consisting of $50$ nodes. The quantity of interest in this case is the absolute balance of the graph at time step $k$; the absolute balance $Ab[k]$ is defined as
$$
Ab[k] = \sum_{v_j \in \mathcal{V}} \left | 1 - \sum_{v_i \in \mathcal{N}_j^-} w_{ji}[k] \right | \; .
$$
Since the weight matrix at each time step is column stochastic by construction, $Ab[k]$ effectively measures the distance of the weight matrix at time step $k$ from a bistochastic matrix. As it can be seen from Figure \ref{graph_exampleC1}, the algorithm asymptotically converges to a bistochastic adjacency matrix for different values of $\alpha_j$ (in the simulations $\alpha_j$ is chosen to be $0.9, 0.5 \text{ and } 0.1$).

\begin{figure}[h]
\begin{center}
\includegraphics[width=0.79\columnwidth]{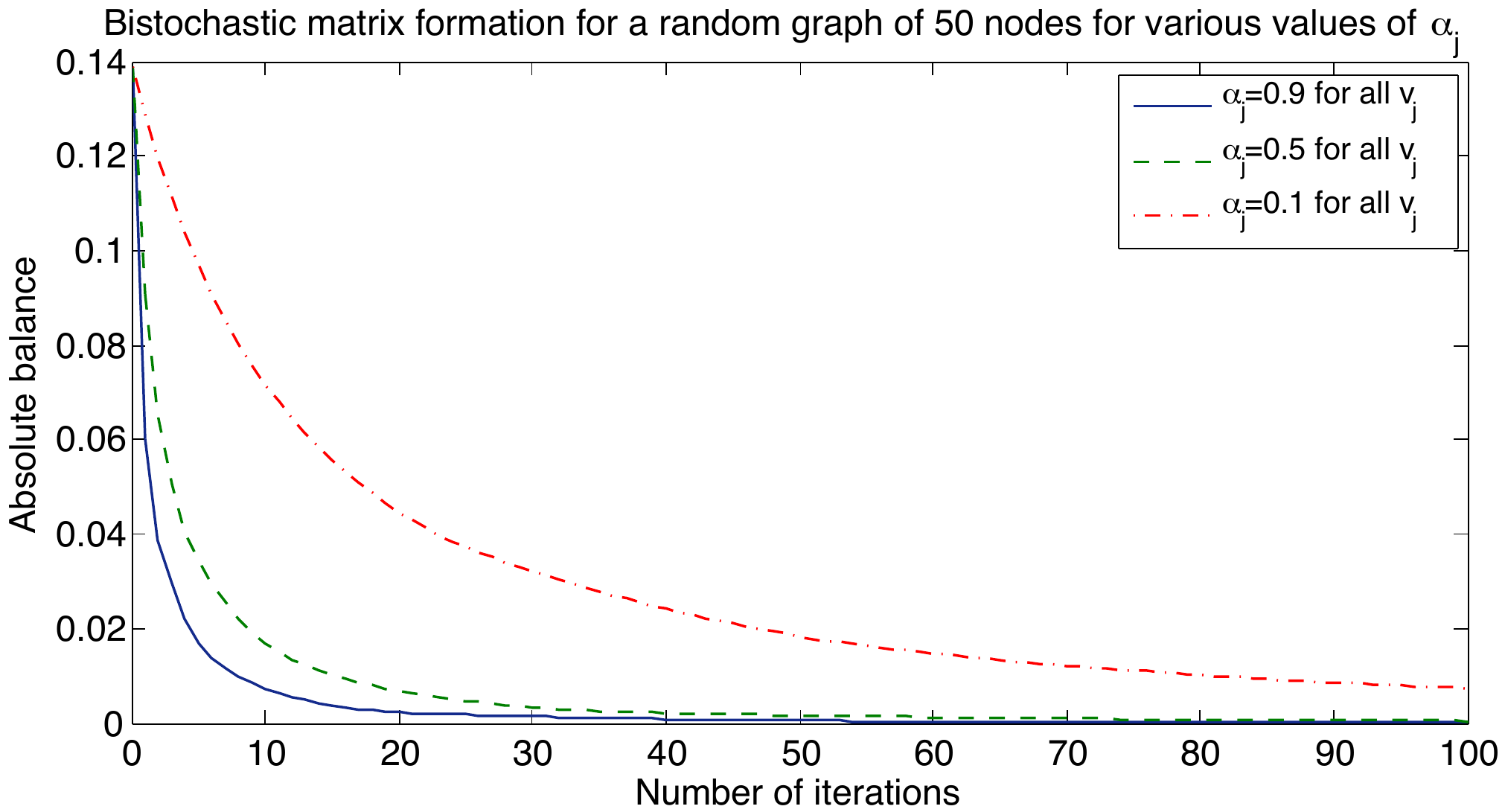}
\caption{Absolute balance for bistochastic formation algorithm (\emph{Algorithm~2}) for different values of $\alpha_j$ $\forall v_j\in \mathcal{V}$. It can be observed that \emph{Algorithm~2} converges to a bistochastic adjacency matrix asymptotically.}
\label{graph_exampleC1}
\end{center}
\end{figure}

%
%
%
%
\section{Comparisons}\label{sec:examples}


\subsection{Weight-Balanced Matrix Formation}

Here we run the algorithm for larger graphs (of size $n=50$) and we compare the performance of our algorithm against the \emph{imbalance-correcting algorithm} in \cite{2009:Cortes} in which every node $v_j$ adds all of its weight imbalance $x_j$ to the outgoing node with the lowest weight $w$.

\begin{figure}[h]
\begin{center}
\includegraphics[width=0.79\columnwidth]{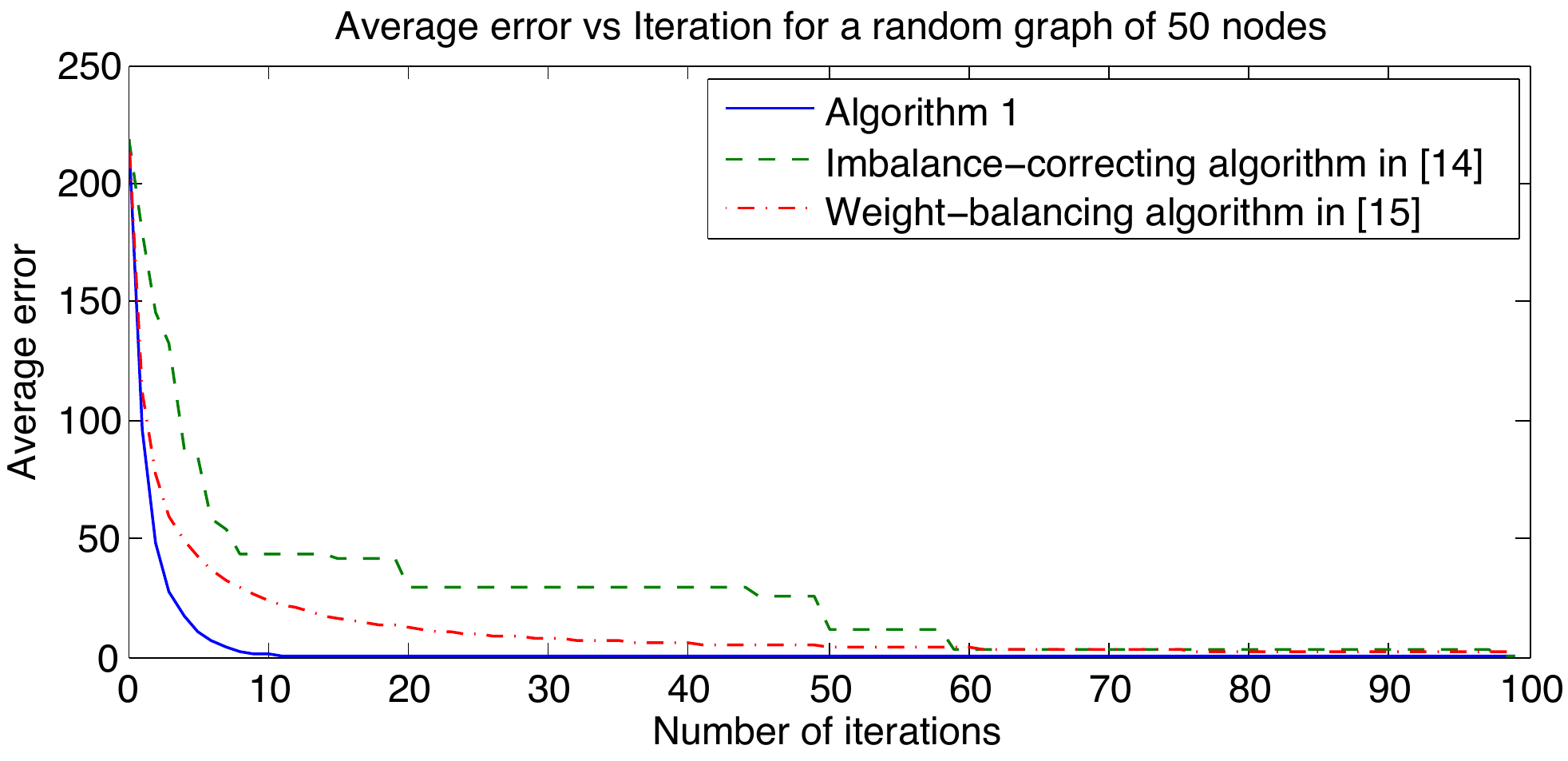}
\includegraphics[width=0.79\columnwidth]{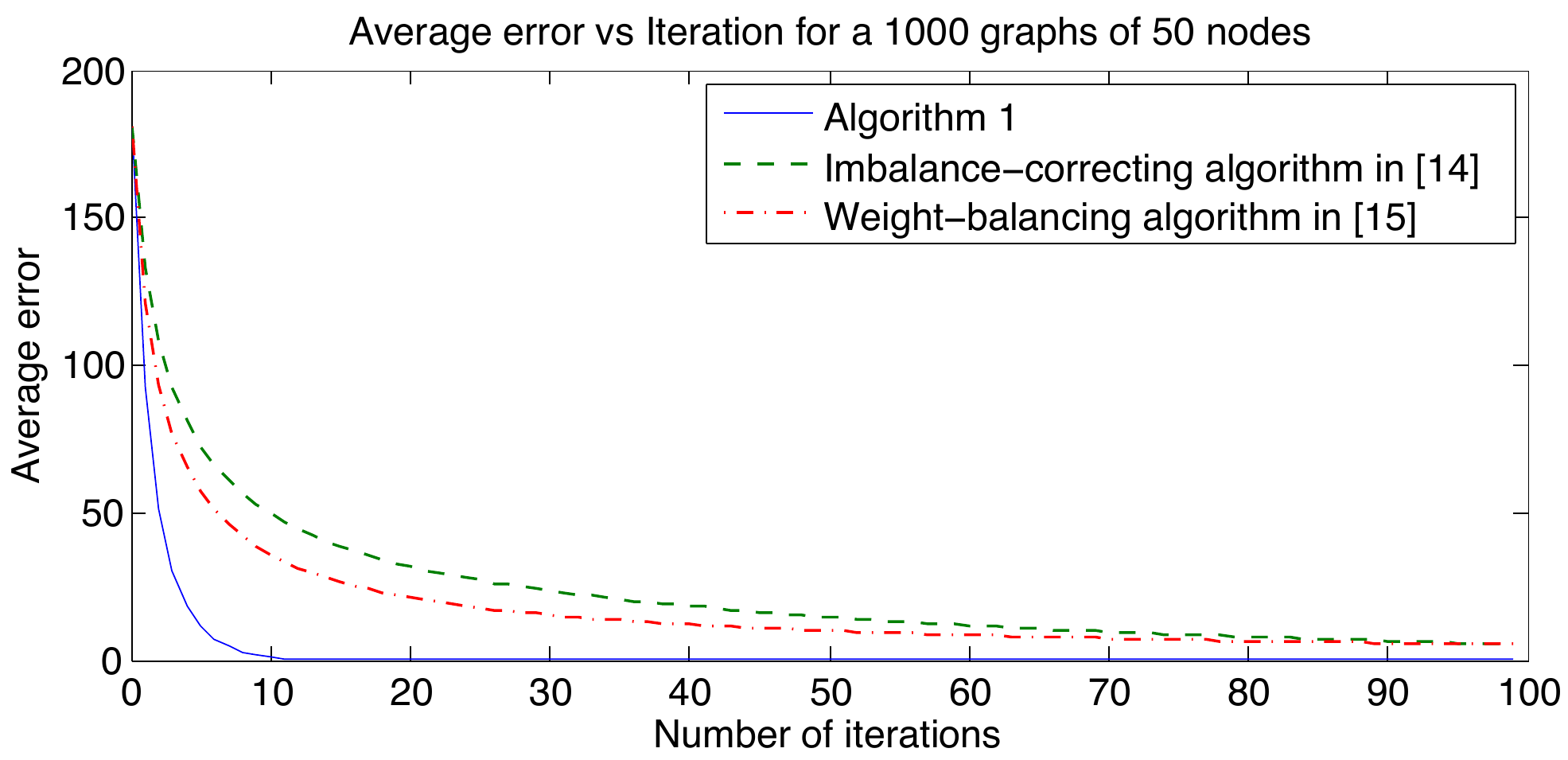}
\caption{
\emph{Top figure:} Absolute balance plotted against the number of iterations for a random graph of 50 nodes for the distributed weight-balancing (\emph{Algorithm~1}) algorithm, the imbalance-correcting algorithm \cite{2009:Cortes}, and the weight-balancing algorithm proposed in \cite{2012:Rikos}.
\emph{Bottom figure:} Average balance plotted against the number of iterations for a 1000 random graphs of 50 nodes for the distributed weight-balancing (\emph{Algorithm~1}), the imbalance-correcting algorithm \cite{2009:Cortes}, and the weight-balancing algorithm proposed in \cite{2012:Rikos}.
}
\label{graph_comparisonA1}
\end{center}
\end{figure}

The suggested weight-balancing algorithm (\emph{Algorithm~1}) shows geometric convergence and outperforms the algorithm suggested in \cite{2009:Cortes}, as shown in Figure \ref{graph_comparisonA1}.

\subsection{Bistochastic Matrix Formation}

Here we run the algorithm for larger graphs (of size $n=50$) and we compare the performance of our algorithm against a distributed algorithm suggested in \cite{2011:Christoforos} in which every node $v_j\in \mathcal{V}$ first chooses a particular weight $w_{jj}[k] \in (0,1)$ for its self weight and it sets the weights of its outgoing links to be $w_{lj}[k]=c_{lj}(1-w_{jj}[k])$, where $c_{lj}>0$, $\forall v_l \in \mathcal{N}_j^{+}$, and $\sum_{v_l \in \mathcal{N}_j^+}c_{lj}=1$.

\begin{figure}[h]
\begin{center}
\includegraphics[width=0.79\columnwidth]{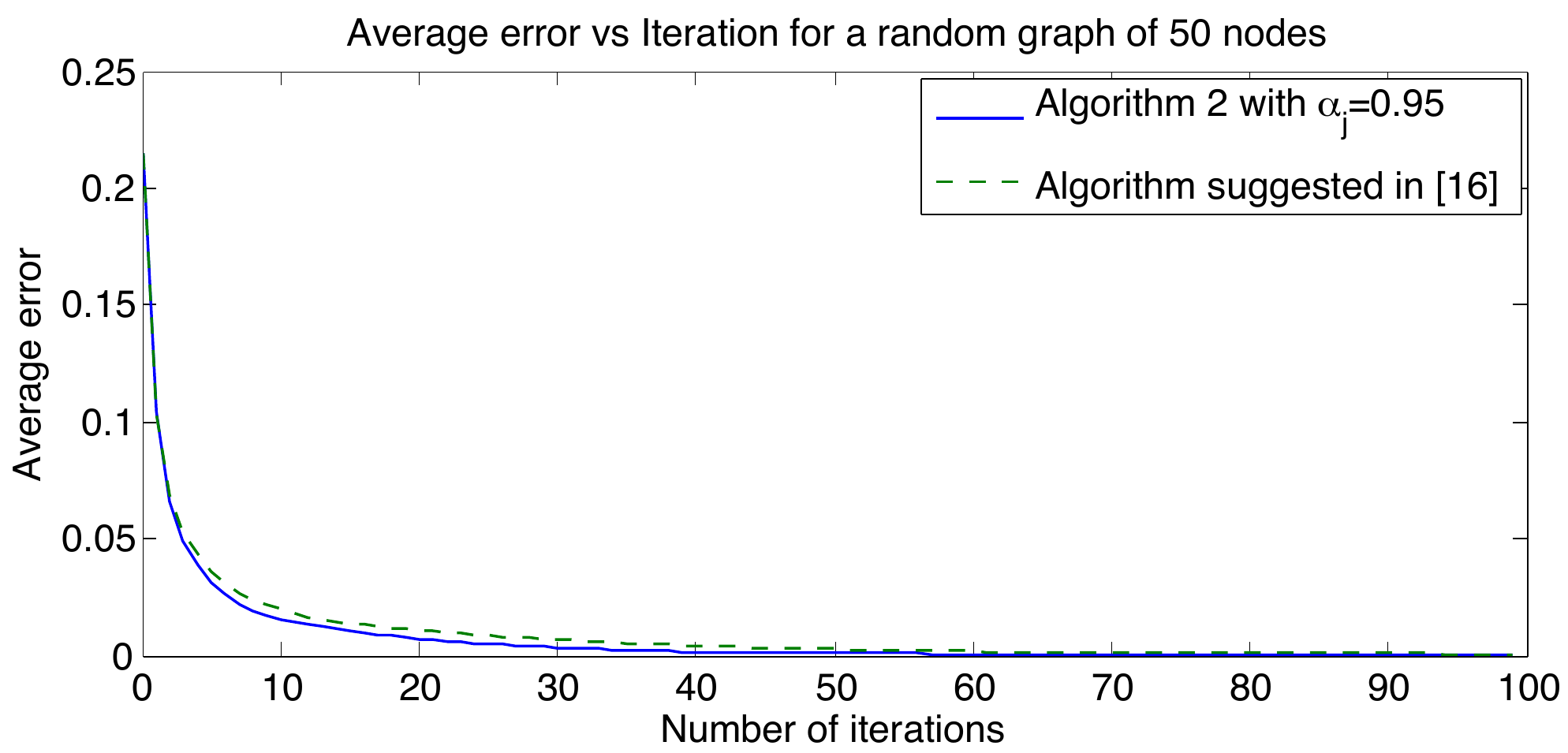}
\includegraphics[width=0.79\columnwidth]{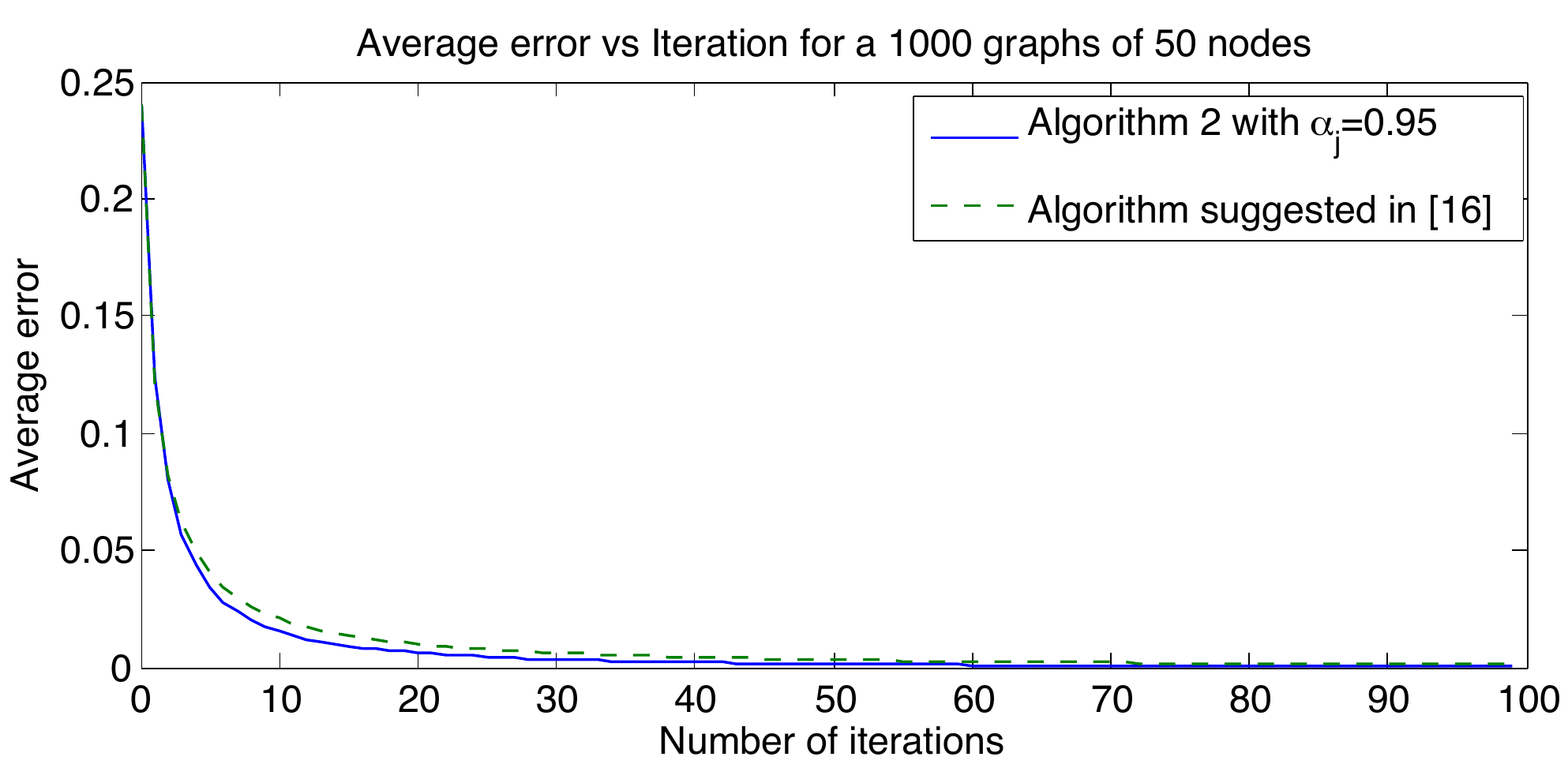}
\caption{
\emph{Top figure:} Absolute balance for bistochastic formation algorithms (\emph{Algorithm~2}) and a distributed algorithm suggested in \cite{2011:Christoforos} for a random strongly connected graph consisting of 50 nodes.
\emph{Bottom figure:} Average balance plotted against the number of iterations for a 100 random strongly connected graphs of 50 nodes for the distributed bistochastic formation algorithms (\emph{Algorithm~2}) and a distributed algorithm suggested in \cite{2011:Christoforos}.
}
\label{graph_comparisonB1}
\end{center}
\end{figure}

The suggested bistochastic formation algorithm (\emph{Algorithm~2}) shows asymptotic convergence and slightly outperforms the algorithm suggested in \cite{2011:Christoforos} for $\alpha_j=0.95$, $\forall v_j \in \mathcal{V}$, as shown in Figure \ref{graph_comparisonB1}.

%
%
%
%
\section{Conclusions and Future Directions}\label{sec:conclusions}

In this paper we have developed two iterative algorithms: one for balancing a weighted digraph and one for forming a bistochastic adjacency matrix in a digraph. Both algorithms are distributed and the second one is a direct extension of the first one. The weight-balancing algorithm is  asymptotic  and is shown to admit geometric convergence rate.  The second algorithm, a modification of the weight-balancing algorithm, leads to a bistochastic digraph with asymptotic convergence admitting geometric convergence rate for a certain set of initial values. The two algorithms perform very well compared to existing algorithms, as illustrated in the examples.

Future work includes the analysis of the behavior of the suggested algorithms in the presence of delays and changing topology.

\bibliographystyle{IEEEtran}
\bibliography{bibliografia_consensus}

\end{document}